\newtheorem{theorem}{Theorem}[section]
\newtheorem{lemma}[theorem]{Lemma}
\newtheorem{proposition}[theorem]{Proposition}
\newtheorem{remark}[theorem]{Remark}
\newtheorem{assumption}{Assumption}
\newcommand{\setdef}[2]{\{#1 \; : \; #2\}}
\newcommand{\Kc}{\mathcal{K}}
\newcommand{\Lc}{\mathcal{L}}
\newcommand{\real}{\mathbb{R}}
\newcommand{\Cc}{\mathcal{C}}
\newcommand{\Dc}{\mathcal{D}}
\newcommand{\Fc}{\mathcal{F}}
\newcommand{\Pc}{\mathcal{P}}
\newcommand{\Gc}{\mathcal{G}}
\newcommand{\Nc}{\mathcal{N}}
\newcommand{\Ec}{\mathcal{E}}
\DeclareSymbolFont{bbold}{U}{bbold}{m}{n}
\DeclareSymbolFontAlphabet{\mathbbold}{bbold}
\newcommand{\norm}[1]{\lVert#1\rVert}
\newcommand\oprocendsymbol{\hbox{$\bullet$}}
\newcommand\oprocend{\relax\ifmmode\else\unskip\hfill\fi\oprocendsymbol}
\newcommand*{\QEDA}{\hfill\ensuremath{\blacksquare}}%
\newcommand\xqed[1]{%
  \leavevmode\unskip\penalty9999 \hbox{}\nobreak\hfill
  \quad\hbox{#1}}
\newcommand\demo{\xqed{$\bullet$}}
\newcounter{countitems}
\newcounter{nextitemizecount}
\newcommand{\setupcountitems}{%
  \stepcounter{nextitemizecount}%
  \setcounter{countitems}{0}%
  \preto\item{\stepcounter{countitems}}%
}
\newcommand{\computecountitems}{%
  \edef\@currentlabel{\number\c@countitems}%
  \label{countitems@\number\numexpr\value{nextitemizecount}-1\relax}%
}
\newcommand{\nextitemizecount}{%
  \getrefnumber{countitems@\number\c@nextitemizecount}%
}
\newcommand{\previtemizecount}{%
  \getrefnumber{countitems@\number\numexpr\value{nextitemizecount}-1\relax}%
}
\computecountitems\ifnumcomp{\previtemizecount}{>}{4}{\end{multicols}}{}}
\newcommand{\longthmtitle}[1]{\mbox{}\emph{(#1):}}
\newcommand{\comment}[1]{} 
\newcolumntype{P}[1]{>{\centering\arraybackslash}p{#1}}
\begin{document}
\title{\bf Distributed Safe Navigation of Multi-Agent Systems using Control Barrier Function-Based Controllers }

\author{Pol Mestres \qquad Carlos Nieto-Granda \qquad Jorge Cort{\'e}s \thanks{P. Mestres and J. Cort\'es are with the Contextual Robotics Institute and the Department of Mechanical and Aerospace
    Engineering, UC San Diego, California,
    \{pomestre,cortes\}@ucsd.edu. Carlos Nieto-Granda is with the U.S. Army Combat Capabilities Development Command Army Research Laboratory (ARL), Adelphi, Maryland, carlos.p.nieto2.civ@army.mil.}%
}
\maketitle

%
%

\begin{abstract}
This paper proposes a distributed controller synthesis framework for safe navigation of multi-agent systems. We leverage control barrier functions to formulate collision avoidance with obstacles and teammates
as constraints on the control input for a state-dependent network optimization problem that encodes team formation and the navigation task. Our algorithmic solution is valid under general assumptions for nonlinear dynamics and state-dependent network optimization problems with convex constraints and strongly convex objectives. The resulting controller is distributed, satisfies the safety constraints at all times, and asymptotically converges to the solution of the state-dependent network optimization problem. We illustrate its performance in a team of differential-drive robots in a variety of complex environments, both in simulation and in hardware.
\end{abstract}

\section{Introduction}\label{sec:introduction}

Safety-critical control has received a lot of attention in the robotics and controls communities motivated by applications like autonomous driving, navigation of robotic swarms, and optimal power flow of energy grids,
where safety constraints are ubiquitous. Control barrier functions (CBFs) are a computationally efficient tool for the synthesis of safe controllers. In scenarios involving multiple autonomous agents, safety constraints often couple the decisions of the different team members. The challenge we face then is how to extend the CBF framework to multi-agent settings in a distributed manner (i.e., with each agent designing its local controller using only information from neighboring agents) while still retaining its efficiency, safety, and optimality guarantees. This question serves as the main motivation for this paper.

\subsubsection*{Literature Review}
This work draws on notions and techniques from the body of work
pertaining to CBFs~\cite{ADA-SC-ME-GN-KS-PT:19, PW-FA:07}, which are a powerful tool
to render safe a given set. Formally, safe controllers can be synthesized~\cite{ADA-XX-JWG-PT:17} by incorporating CBF-based conditions as constraints in an optimization problem whose objective encodes desired specifications on the input. 
Such optimization problems have state-dependent constraints and need to be studied in feedback loop with a plant. 
Often, these problems cannot be solved instantaneously and instead approximate controllers must be implemented. Here, we draw on ideas from the ``dynamical systems approach to algorithms" that views optimization algorithms as continuous-time dynamical systems, cf.~\cite{RWB:91,UH-JBM:94}, which is a useful perspective when implementing optimization-based controllers in feedback systems~\cite{AH-SB-FD:21,MC-EDA-AB:20,MC-JSP-AB:19,AA-JC:24-tac,AC-YC-BC-JC-EDA:24-tsg}.
%
%
In the context of multi-agent systems, many applications require a distributed implementation of the individual agents' controllers. For optimization-based controllers resulting from CBFs, this requires a distributed solution of the resulting optimization problem. The works~\cite{UB-LW-ADA-ME:15, LW-ADA-ME:17} tackle this problem 
for quadratic programs (QPs) by splitting a centralized QP into local QPs that can be solved efficiently while preserving safety guarantees. However, the solution of these local QPs might lack optimality guarantees with respect to the original QP. The recent works~\cite{XT-DVD:22, VNFA-XT-DVD:23, XT-CL-KHJ-DVD:24, CL-XT-XW-DVD-KHJ:24, PM-JC:23-cdc}
introduce different algorithms to solve constrained optimization problems in a distributed fashion while satisfying the constraints throughout the execution of the algorithm. However,~\cite{XT-DVD:22,XT-CL-KHJ-DVD:24,VNFA-XT-DVD:23,CL-XT-XW-DVD-KHJ:24} are restricted to a class of parametric QPs with conditions on the coefficients of the affine constraints. This approach has recently been successfully implemented for the problem of global connectivity maintenance~\cite{NDC-PS-PRG:24}.
%
%
On the other hand, our previous work~\cite{PM-JC:23-cdc} does not consider the implementation of the optimization problem in feedback loop with a plant.

\subsubsection*{Statement of Contributions}

We design a distributed controller for safe navigation of multi-agent systems. 
We propose a synthesis framework which leverages CBFs to formulate obstacle avoidance and inter-agent collision avoidance constraints as affine inequalities in the control input. These constraints are included in a state-dependent network optimization problem that finds the control inputs allowing the agents to reach different waypoints of interest while maintaining a given formation and satisfying the safety constraints. Our first contribution leverages the particular structure of the optimization problem to decouple its constraints by using a set of auxiliary variables while still keeping the same feasible set. This enables us to derive a distributed update law for these auxiliary variables that allows each agent to obtain its local control input by solving a local optimization problem without the need to coordinate with any other agent. Our second contribution establishes that the proposed controller design is distributed, safe, and asymptotically converges to the solution of the state-dependent network optimization problem. Our last contribution is the implementation of the proposed controller in a variety of different environments, robots and formations, both in simulation and in real hardware.

%
%

\section{Preliminaries}\label{sec:prelims}

We introduce here basic notation and preliminaries on CBFs, constrained optimization, and projected saddle-point dynamics. This section can be safely skipped by a reader already familiar with these notions.

\subsubsection*{Notation}
We denote by $\real$ and $\real_{\geq 0}$ the set of real and nonnegative real numbers, respectively. For a positive integer $n$, we let $[n]=\{1,\hdots,n\}$. Given a set $A$, $|A|$ denotes its cardinality. Given $x\in\real^n$, $\norm{x}$ denotes its Euclidean norm. For $a\in\real$ and $b\in\real_{\geq0}$, we let
\begin{align*}
  [a]_b^+ = \begin{cases}
    a, \quad &\text{if} \ b>0, \\
    \max\{0,a\}, \quad &\text{if} \ b=0.
\end{cases}
\end{align*}
Given $a\in\real^n$ and $b\in\real_{\geq0}^n$, $[a]_b^{+}$ denotes the vector whose $i$-th component is $[a_i]_{b_i}^{+}$, for $i\in[n]$.
Given a set
$\Pc\subseteq\real^{n}$ and variables
$\xi=\{ x_{i_j} \}_{j=1}^k$, we denote by
$\Pi_{\xi}\Pc=\setdef{(x_{i_1},x_{i_2},\dots,x_{i_k})\in\real^k}{x\in\Pc}$
the projection of $\Pc$ onto the $\xi$ variables. An undirected graph is a pair $\Gc=(V,\Ec)$, where $V=V(\Gc)=[N]$ is the vertex set and $\Ec = \Ec(\Gc)\subseteq V\times V$ is the edge set, with $(i,j)\in\Ec$ if and only if $(j,i)\in\Ec$.
The set $\Nc_i=\setdef{j\in V}{(i,j)\in\Ec}$ denotes the neighbors of node $i$.
For a function $f:\real^n\times\real^m\to\real$, we denote by $\nabla_x f$ and $\nabla_y f$ the column vectors of partial derivatives of $f$ with respect to the first and second arguments, respectively.
Let $F:\real^{n}\to\real^{n}$ be locally Lipschitz. A set $\Cc$ is forward invariant for the dynamical system $\dot{x}=F(x)$ if all trajectories that start in $\Cc$ remain in $\Cc$ for all positive times.
A function $\alpha :\real\to\real$ is of extended class $\Kc_{\infty}$ if $\alpha(0) = 0$, $\alpha$ is strictly increasing, and $\lim\limits_{t\to\infty}\alpha(t)=\infty$.


\subsubsection*{Control Barrier Functions}
Consider the control-affine system
\begin{align}\label{eq:control-affine-sys}
    \dot{x}=F(x)+G(x)u
\end{align}
where $F:\real^n\rightarrow\real^n$ and $G:\real^n\to\real^{n\times m}$ are locally Lipschitz functions, with $x\in\real^n$ the state and $u\in\real^m$ the input. Let $\Cc\subset\real^n$ and $h:\real^n\to\real$ be a continuously differentiable function such that
\begin{align}\label{eq:safe-set}
  \Cc \!= \! \setdef{x\in\real^n \! \! }{ \! h(x)\geq0}, \
  \partial\Cc \!=\! \setdef{x\in\real^n \! \!}{ \! h(x)=0}.
\end{align}
The function $h$ is a control barrier function (CBF)~\cite[Definition 2]{ADA-SC-ME-GN-KS-PT:19} of $\Cc$ if there exists an extended class $\Kc_{\infty}$ function $\alpha$ and $\Dc\subset\real^n$ with $\Cc\subset\Dc$ such that for all $x\in\Dc$, there exists $u\in\real^m$ with
\begin{align}\label{eq:cbf-ineq}
L_F h(x) + L_G h(x)u +\alpha(h(x)) \geq 0.
\end{align}
A Lipschitz controller $k:\real^n\to\real^m$ such that $u=k(x)$ satisfies~\eqref{eq:cbf-ineq} for all $x\in\Cc$ makes $\Cc$ forward invariant. Hence, CBFs provide a way to guarantee safety.

\subsubsection*{Projected Saddle-Point Dynamics}
Given continuously differentiable functions $f:\real^n\to\real$ and $g:\real^n\to\real^p$, whose derivatives are locally Lipschitz, consider the constrained nonlinear program
\begin{align}\label{eq:constrained-opt}
  & \min \limits_{x\in\real^n} f(x)
  \\
  \notag
  \qquad & \text{s.t.} \quad g(x) \leq0.
\end{align}
Let $\Lc:\real^n\times\real^p_{\geq0}\to\real$ be its associated Lagrangian, $\Lc(x,\lambda) = f(x) + \lambda^T g(x)$.
The \textit{projected saddle-point dynamics} for $\Lc$ are defined as follows:
\begin{subequations}
  \begin{align}
    \dot{x} &= -\nabla_x \Lc(x,\lambda,\mu),
    \\
    \dot{\lambda} &= [\nabla_{\lambda}
                    \Lc(x,\lambda,\mu)]_{\lambda}^{+}.
  \end{align}
  \label{eq:projected-saddle-point-dyn}
\end{subequations}
If $f$ is strongly convex and $g$ is convex, then  $\Lc$ has a  unique saddle point, which corresponds to the KKT point of~\eqref{eq:constrained-opt}. Moreover,~\cite[Theorem 5.1]{AC-EM-SHL-JC:18-tac} ensures that it is globally asymptotically stable under the dynamics~\eqref{eq:projected-saddle-point-dyn}.

\subsubsection*{Constraint Mismatch Variables}
Given $N\in\mathbb{Z}_{>0}$, 
consider a network composed by agents $\{1,\hdots,N\}$ whose communication topology is described by a connected undirected graph $\Gc$. An edge $(i,j)$ represents the fact that agent $i$ can receive information from agent $j$ and vice versa.
For each $i\in[N]$ and $k\in[p]$, where $p\in\mathbb{Z}_{>0}$,
let $f_i:\real^n\to\real$ be a strongly convex function and $g_i^k:\real^n \rightarrow \real$ a convex function.
We consider the following optimization problem with separable objective function and constraints
\begin{align}\label{eq:separable-distr-opt-pb}
  & \min \limits_{ u \in\real^{nN} } \sum_{i=1}^N f_i( u_i)
  \\
  \notag
  \qquad & \text{s.t.} \quad \sum_{i\in V(\Gc_k)} g_i^{k}( u_i ) \leq0, \quad
           k\in[p],
\end{align}
where $\Gc_k$ is a connected subgraph of $\Gc$ for each $k\in[p]$. 
For each $k \in [p]$, the constraint in~\eqref{eq:separable-distr-opt-pb} couples the local variable~$u_i$ of agent $i$ with the variables of all the other agents in $\Gc_k$. 
The constraints in~\eqref{eq:separable-distr-opt-pb} can be decoupled by introducing \textit{constraint-mismatch variables}~\cite{AC-JC:16-allerton,PM-JC:23-cdc}, which help agents keep track of local constraints while collectively satisfying the original constraints.
Specifically, we add one constraint-mismatch variable per agent and constraint.
We let $z_i^k\in\real$ be the constraint-mismatch variable for agent $i$ and constraint $k$. Let $P_i:=
\setdef{k\in[p]}{i\in V(\Gc_k)}$ be the indices of the constraints involving agent $i$. For convenience, we let $u=[u_1,\hdots,u_{N}]$, $z_i=\{ z_i^k \}_{k\in P_i}$, $z=[z_1,\hdots,z_N]$ and $q:=\sum_{i=1}^N |P_i|$.
Next, consider the problem
\begin{align}\label{eq:distr-opt-pb-y}
  & \min \limits_{u \in\real^{nN}, \ z \in\real^q } \sum_{i=1}^N f_i(u_i)
  \\
  \notag
  & \text{s.t.} \quad g_i^k(u_i)+\sum_{j\in
    \Nc_i \cap V(\Gc_k)}(z_i^k-z_j^k)\leq0, \ i\in V(\Gc_k), \ k\in [p].
\end{align}
In~\eqref{eq:distr-opt-pb-y}, the constraints are now locally expressible, meaning that agent $i\in[N]$ can evaluate the ones in which its  variable $u_i$ is present by using variables obtained from its neighbors.
The next result establishes the equivalence of~\eqref{eq:separable-distr-opt-pb} and ~\eqref{eq:distr-opt-pb-y}.

\begin{proposition}\longthmtitle{Equivalence between the two formulations}\label{prop:equivalence}
    Let $\Fc^*$ be the solution set of~\eqref{eq:distr-opt-pb-y}. Then, $u^*=\Pi_u{\color{blue}}(\Fc^*)$ is the optimizer of~\eqref{eq:separable-distr-opt-pb}.
\end{proposition}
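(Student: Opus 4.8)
The plan is to reduce the statement to the claim that problems~\eqref{eq:separable-distr-opt-pb} and~\eqref{eq:distr-opt-pb-y} have feasible sets with the same projection onto the $u$-variables. Let $\Uc$ be the feasible set of~\eqref{eq:separable-distr-opt-pb} and $\widehat{\Uc}$ that of~\eqref{eq:distr-opt-pb-y}; I would show $\Pi_u\widehat{\Uc}=\Uc$. Granting this, since the objective of~\eqref{eq:distr-opt-pb-y} equals $\sum_{i=1}^N f_i(u_i)$ and does not involve $z$, the two problems share the same optimal value, and $(u,z)\in\Fc^*$ if and only if $u$ minimizes~\eqref{eq:separable-distr-opt-pb} and $(u,z)\in\widehat{\Uc}$; hence $\Pi_u\Fc^*$ is precisely the minimizer set of~\eqref{eq:separable-distr-opt-pb}. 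As each $f_i$ is strongly convex, $\sum_{i=1}^N f_i$ is strongly convex, so its minimizer $u^*$ over the closed convex set $\Uc$ is unique (and exists whenever $\Uc\neq\emptyset$), giving $\Pi_u\Fc^*=\{u^*\}$.

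For the inclusion $\Pi_u\widehat{\Uc}\subseteq\Uc$: take $(u,z)\in\widehat{\Uc}$ and fix $k\in[p]$. Summing the $k$-th family of constraints in~\eqref{eq:distr-opt-pb-y} over all $i\in V(\Gc_k)$ yields
\[
\sum_{i\in V(\Gc_k)} g_i^k(u_i)\;+\;\sum_{i\in V(\Gc_k)}\ \sum_{j\in\Nc_i\cap V(\Gc_k)}\bigl(z_i^k-z_j^k\bigr)\;\le\;0 .
\]
Because $\Gc$ is undirected, each unordered pair $\{i,j\}$ with $i,j\in V(\Gc_k)$ and $(i,j)\in\Ec(\Gc)$ contributes $(z_i^k-z_j^k)+(z_j^k-z_i^k)=0$, so the double sum vanishes and $\sum_{i\in V(\Gc_k)} g_i^k(u_i)\le 0$. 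Since $k$ is arbitrary, $u\in\Uc$.

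For $\Uc\subseteq\Pi_u\widehat{\Uc}$: fix $u\in\Uc$ and construct $z$ constraint by constraint. For each $k\in[p]$ let $L_k$ denote the Laplacian of the subgraph of $\Gc$ induced by $V(\Gc_k)$; this graph contains $\Gc_k$ and is therefore connected. With $z^k:=(z_i^k)_{i\in V(\Gc_k)}$, the left-hand side of the $i$-th constraint in the $k$-th family of~\eqref{eq:distr-opt-pb-y} is exactly $g_i^k(u_i)+(L_k z^k)_i$, so it suffices to find $z^k$ with $L_k z^k\le b^k$, where $b^k_i:=-g_i^k(u_i)$ satisfies $\1^{\transpose}b^k=-\sum_{i\in V(\Gc_k)}g_i^k(u_i)\ge 0$ by feasibility of $u$. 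The crux is the elementary fact that, for a connected graph with Laplacian $L$, the system $Lz\le b$ is solvable if and only if $\1^{\transpose}b\ge 0$: necessity follows from $\1^{\transpose}L=0$; for sufficiency, $L$ is symmetric positive semidefinite with $\ker L=\operatorname{span}\{\1\}$, hence restricts to a bijection of $\1^{\perp}$, and choosing $z\in\1^{\perp}$ with $Lz$ equal to the $\1^{\perp}$-component of $b$ gives $Lz=b-\tfrac{\1^{\transpose}b}{|V|}\1\le b$, where $|V|$ is the number of nodes. Applying this with $L=L_k$, $b=b^k$ for every $k$ and collecting the $z^k$ into $z\in\real^q$ produces $(u,z)\in\widehat{\Uc}$.

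I expect the reverse inclusion $\Uc\subseteq\Pi_u\widehat{\Uc}$ — concretely, the Laplacian solvability fact used to build the constraint-mismatch variables from a given feasible $u$ — to be the main obstacle; the forward inclusion is a telescoping cancellation and the passage from feasible-set equivalence to minimizer equivalence is routine convexity bookkeeping.
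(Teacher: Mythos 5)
Your proof is correct, and it is essentially the argument the paper has in mind: the paper omits the proof and defers to \cite[Proposition~4.1]{PM-JC:23-cdc}, whose strategy is exactly yours --- telescoping cancellation of the mismatch terms over the (connected) induced subgraph on $V(\Gc_k)$ for one inclusion, and surjectivity of the connected-graph Laplacian onto $\vectorones^{\perp}$ to construct the $z^k$ for the reverse inclusion, with strong convexity giving uniqueness of $u^*$. The ``necessary modifications'' the paper alludes to (constraints involving only the agents in $V(\Gc_k)$) are precisely what you handle by working with the induced subgraph and its Laplacian $L_k$.
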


The proof is similar to~\cite[Proposition 4.1]{PM-JC:23-cdc}, with the necessary modifications to account for the fact that the constraints of~\eqref{eq:separable-distr-opt-pb} might only involve a subset of the agents.

\section{Problem Statement}\label{sec:problem-statement}
We are interested in designing distributed controllers that allow a
team of differential-drive robots to safely navigate an environment
while maintaining a desired formation and visiting a sequence
of waypoints of interest.  The robots have identities
$\{ 1,\hdots,N \}$ and follow unicycle dynamics
\begin{align}\label{eq:unicycle-dynamics}
    \dot{x}_i = v_i \cos{\theta_i}, \quad \dot{y}_i = v_i \sin{\theta_i}, \quad
    \dot{\theta_i} = \omega_i,
\end{align}
where $s_i=[x_i,y_i]\in\real^2$ is the position of agent $i$, $\theta_i\in[0,2\pi)$ its heading and $v_i\in\real$ and $w_i\in\real$ are its linear and angular velocity control inputs, respectively.

We next leverage CBFs to encode the different safety specifications, giving rise to affine constraints in the control inputs. We note that, under the dynamics~\eqref{eq:unicycle-dynamics}, direct application of~\eqref{eq:cbf-ineq} on functions that only depend on position results in limited design flexibility because $\omega_i$ does not appear in the time derivative of $s_i$. Instead, we follow~\cite[Section IV]{PG-IB-ME:19} and define, for all $i\in[N]$,
\begin{align*}
    R(\theta_i) \! = \! \begin{bmatrix}
        \cos{\theta_i} & \! -\sin{\theta_i} \\
        \sin{\theta_i} & \! \cos{\theta_i}
    \end{bmatrix}, \ p_i \! = \! s_i \! + \! l R(\theta_i) e_1, \ L \! = \! \begin{bmatrix}
        1 & \! 0 \\
        0 & \! 1/l
    \end{bmatrix} 
\end{align*}
where $e_1=[1, 0]^T$ and $l>0$ is a design parameter. This defines $p_i$ as a point orthogonal to the wheel axis of the robot. 
It follows that $\dot{p}_i=R(\theta_i)L^{-1}u_i$, where $u_i=[v_i,w_i]^T$. Hence, by choosing $p_i$ as our position variable, both control inputs appear in the time derivative of the position, which allows for more flexibility in our control design.

\subsubsection*{Avoiding obstacles}
We consider an environment with $M$ obstacles $\{ \mathcal{O}_k \}_{k\in[M]}$, each expressable as the sublevel set of a differentiable function $h_k$, i.e., $\mathcal{O}_k =\setdef{(x,y)\in\real^2}{h_k(x,y) < 0}$. To guarantee that the robot does not collide with the obstacles, we want to ensure that
\begin{align}\label{eq:hk-p}
    h_k(p_i)\geq\eta_k > 0, \quad \forall k\in[M].
\end{align}
The parameter $\eta_k\in\real$ should be taken large enough to
guarantee that the whole physical robot (instead of just $p_i$) does
not collide with the obstacle $\mathcal{O}_k$.  For instance, if $r_i$
is the radius of robot $i$ and $\mathcal{O}_k$ is a circular obstacle
with center at $m_k\in\real^2$ and radius $R_k$ so that
$h_k(p_i)=\norm{p_i-m_k}^2-R_k^2$, then $\eta_k$ can be taken as
$(r_i+l)^2 + 2R_k(r_i+l)$.  The CBF condition associated
with~\eqref{eq:hk-p} with linear extended class $\Kc_{\infty}$ function with slope $\alpha_k>0$ reads
\begin{align}\label{eq:hk-p-cbf}
    \nabla h_k(p_i)^T R(\theta_i) L^{-1} u_i \geq -\alpha_k(h_k(p_i)-\eta_k) .
\end{align}

\subsubsection*{Avoiding inter-agent collisions}
We also want to enforce that agents do not collide with other team members. To achieve this, 
we assume it is enough for each robot $i$ to avoid colliding with a subset of the agents $N_i\subset [N] \backslash \{ i \}$.
This is motivated by the fact that our control design will make the team maintain a formation at all times. Hence, agent only needs to avoid colliding with agents closest to it in the formation.
For this reason, we assume that if $j\in N_i$, then $i\in N_j$, and agent $i$ can communicate with all agents in $N_i$ to obtain their state variables. 
We assume that the resulting communication graph is connected.
 %
%
For any $i\in[N]$ and $j\in N_i$, we want to ensure
\begin{align}\label{eq:dpi-pj}
    d(p_i,p_j):=\norm{p_i-p_j}^2 - d_{\text{min}}^2 \geq 0,
\end{align}
with $d_{\text{min}}\geq r_i+r_j+2l$. The CBF condition for~\eqref{eq:dpi-pj} with a linear extended class $\Kc_{\infty}$ function with slope $\alpha_c^{ij}>0$ reads
\begin{align}\label{eq:interagent-coll-avoidance-cbf}
    2(p_i\!-\!p_j)^T (R(\theta_i) L^{-1} u_i \!-\! R(\theta_j) L^{-1}
  u_j) \geq - \alpha_c^{ij} d(p_i,p_j).
\end{align}

\subsubsection*{Team formation}
Finally, we are interested in making the team reach a goal while
maintaining a certain formation. To do so, we define a \textit{leader}
for the team (without loss of generality, agent $1$). The rest of the
agents $\{2,\hdots,N\}$ are referred to as \textit{followers}.  The
leader is in charge of steering the team towards a given waypoint
$q_1\in\real^2$.  To achieve it, we define a nominal controller
$u_{\text{nom},1}:\real^2\to\real^2$ that steers it towards $q_1$. We use the stabilizing controller for the unicycle dynamics
in~\cite{MA-GC-AB-AB:95}. To define it, let $e_1:\real^2\to\real$ and $\beta_1:\real^2\to\real$ defined as
\begin{align*}
    e_1(p_1) = \norm{p_1-q_1}, \quad \beta_1(p_1) = \arctan \Big(\frac{p_{1,y}-q_{1,y}}{p_{1,x}-q_{1,x}} \Big).
\end{align*}
The control law is $u_{\text{nom},1}(p_1)=[v_1(p_1),w_1(p_1)]$, where
\begin{subequations}
\begin{align}
    v_1(p_1) &= k_r e(p_1) \cos( \beta_1(p_1) )-\theta_1 ), 
    \\
    \omega_1(p_1) &= k_a \beta_1(p_1) +
    \frac{k_r}{2} \sin( 2\beta_1(p_1) )
    \frac{\beta_1(p_1)\!+\!h\theta_1}{\beta_1(p_1)},
\end{align}
\label{eq:unicycle-nominal-stabilizing}
\end{subequations}
and $k_r > 0, k_a > 0$ and $h > 0$ are design parameters. We can also specify a sequence of waypoints for the leader: once the leader is within a given
tolerance of the current waypoint,
$u_{\text{nom}, 1}$ can be updated to steer it towards the
next waypoint.

As the leader moves towards the desired waypoint, the followers follow it while maintaining a certain formation of interest. We define the desired formation positions for the followers as follows.
First, recall that $N_1 :=\setdef{i\in[N]\backslash\{ 1 \} }{i \ \text{and} \ 1 \ \text{can communicate their respective state variables}}$ and, for $k\in\mathbb{Z}_{>0}$, $k>1$, define the $k$-neighborhood of the leader as $N_1^k:=\setdef{i\in[N]}{\exists \ j\in N_1^{k-1} \ \text{s.t.} \ i\in N_j}$ (here, $N_1^1=N_1$). Since the communication graph is connected, there exists $K\in\mathbb{Z}_{>0}$ such that, for all $i\in[N]\backslash\{ 1 \}$, there is $k\in[K]$ such that $i\in N_1^k$. For every $i\in[N]\backslash \{ 1 \}$, we let 
$k_i$ be the smallest positive integer $k$ such that $i\in N_1^k$.
For every $i\in[N]\backslash\{1\}$, we consider functions $q_i:\real^{2|N_1^{k_i-1}\cap N_i|}\to\real^2$ such that $q_i( \{ p_j \}_{j\in N_1^{k_i-1}\cap N_i} )$ defines the desired formation position for agent $i$. Agent $i$ aims to remain as close as possible to $q_i( \{ p_j \}_{j\in N_1^{k_i-1}\cap N_i} )$ while maintaining the safety constraints. To achieve this, we define a nominal controller $u_{\text{nom},i}:\real^2\to\real^2$ for each agent
$i\in\{2,\hdots,N\}$ analogous to~\eqref{eq:unicycle-nominal-stabilizing} that steers it towards $q_i( \{ p_j \}_{j\in N_1^{k_i-1} \cap N_i} )$. Note that $q_i$ can be computed in a distributed fashion because it only depends on the positions of agents in $N_1^{k_i-1}\cap N_i \subseteq N_i$.

Agents collectively try to design controllers $\{ u_i \}_{i=1}^N$
that satisfy the obstacle avoidance and inter-agent collision
avoidance constraints while remaining as close as possible to
their nominal controller. By employing weighting matrix-valued functions $\Gamma_i:\real^n\to\real^{m\times m}$ for each $i\in[N]$ that can be designed to penalize the linear and angular velocity inputs differently, and leveraging the CBF conditions~\eqref{eq:hk-p-cbf},~\eqref{eq:interagent-coll-avoidance-cbf}, we obtain the following network-wide optimization problem:
\begin{align}\label{eq:diff-drive-pb}
  &\min \limits_{ \{u_i \in \real^2 \}_{i=1}^N } \quad \sum_{i=1}^N \frac{1}{2}\norm{ \Gamma_i(p_i)( u_i-u_{\text{nom}, i}(p_i) )}^2
  \\
  \notag
  & \text{s.t.} \ (p_i-p_j)^T (R(\theta_i) L^{-1} u_i - R(\theta_j)L^{-1}u_j) \geq -\alpha_c^{ij} d(p_i,p_j), \\
  \notag
  & \quad \nabla h_k(p_i)^T R(\theta_i) L^{-1} u_i \geq -\alpha_k (h_k(p_i)-\eta_k), \\
  \notag
  & \qquad \qquad \qquad \qquad \qquad \qquad i \in [N], j\in N_i, \ k\in [M].
\end{align}
The presence of the controls $u_i$ and $u_j$ in the inter-agent collision avoidance
constraints~\eqref{eq:interagent-coll-avoidance-cbf} means that the satisfaction of such constraints requires
coordination between agents involved in the
constraint.
This, together with the state-dependency of the objective
function and constraints, poses challenges in the implementation of a
distributed algorithm that solves~\eqref{eq:diff-drive-pb}.  Our goal
is to design a distributed controller 
that satisfies the constraints at all times and with the same optimality properties as the solution obtained directly from~\eqref{eq:diff-drive-pb}.

\section{Distributed Controller Design}\label{sec:proposed-solution}

In this section we design a distributed algorithmic solution to the problem formulated in
Section~\ref{sec:problem-statement}. As it turns out, our solution is valid for a more general setup, as we explain next.
Assume that the agents' communication network is described by a connected
undirected graph $\Gc$, as in Section~\ref{sec:problem-statement}.
An edge $(i,j)$ represents the fact
that agent $i$ can receive information from agent $j$ and vice
versa. We describe the dynamics of each agent $i\in[N]$ by
\begin{align}\label{eq:general-dynamics}
    \dot{\xi}_i = F_i(\xi_i,u_i)
\end{align}
where $F_i:\real^n\times\real^m\to\real^n$ is a locally Lipschitz function for each $i\in[N]$, $\xi_i\in\real^n$ is the state variable of agent $i$ and $u_i\in\real^m$ is its local control input.
Additionally, let $f_i:\real^n\times\real^m\to\real$ and $g_i^k:\real^{n|\Nc_i|}\times\real^m\to\real$, with $i\in[N]$ and $k\in[p]$ be functions satisfying the following.

\begin{assumption}\longthmtitle{Regularity and convexity of the optimization problem}\label{as:regularity-convexity-optimization-pb}
    For all $i\in[N]$ and $k\in[p]$, $f_i$ and $g_i^k$ are continuously differentiable functions with Lipschitz derivatives. 
    We assume that for all $i\in[N]$ and $\xi_i\in\real^n$, the functions $f_i(\xi_i,\cdot)$ are strongly convex, and for all $i\in[N]$, $k\in[p]$ and $\bar{\xi}_i\in\real^{n|\Nc_i|}$, the functions $g_i^k(\bar{\xi}_i,\cdot)$ are convex. 
\end{assumption}
The agents need to coordinate to solve an optimization problem of the form
\begin{align}\label{eq:distr-opt-pb}
  & \min \limits_{{\{u_i \in \real^m \}_{i=1}^N}} \sum_{i=1}^N f_i(\xi_i, u_i),
  \\
  \notag
    & \qquad \text{s.t.} \sum_{i\in V(\Gc_k)} g_i^k(\xi_{\Nc_i}, u_i) \leq0, \quad k \in [p].
\end{align}
where $\xi_{\Nc_i}=\{\xi_j\}_{j\in\Nc_i}$ and $\Gc_k$ is a connected subgraph of $\Gc$ for each $k\in[p]$.
Note that~\eqref{eq:diff-drive-pb} is a particular case of~\eqref{eq:distr-opt-pb},  where 
$f_i(\xi_i,u_i)=\frac{1}{2}\norm{\Gamma_i(\xi_i)(u_i-u_{\text{nom},i}(\xi_i) )}^2$ and $\Gc_k$ corresponds to the graph with nodes $\{ i \} \cup \{ j \}$ and an edge between $\{ i \}$ and $\{ j \}$ for the inter-agent collision avoidance constraint between agents $i$ and $j$, and $\Gc_k$ corresponds to the singleton $\{ i \}$ for the obstacle avoidance constraints of agent~$i$. Moreover, for the inter-agent collision avoidance constraint between agents $i$ and $j$,
{\small
\begin{align*}
g_i^k(p_i,\theta_i,p_j,\theta_j,u_i) &=-2(p_i-p_j)^T R(\theta_i) L^{-1} u_i - \alpha_c^{ij} d(p_i,p_j) ,
\\
g_j^k(p_j,\theta_j,p_i,\theta_i,u_j) &=-2(p_j-p_i)^T R(\theta_j) L^{-1} u_j - \alpha_c^{ij} d(p_i,p_j).    
\end{align*}
}
For the collision avoidance constraint of agent $i$ with the $k$th obstacle, $g_i^k(p_i,\theta_i,u_i) = -\nabla h_k(p_i)^T R(\theta_i) L^{-1}u_i - \alpha_k (h_k(p_i)-\eta_k)$.
Problem~\eqref{eq:distr-opt-pb} can encode other types of safety constraints and more general dynamics.
We henceforth denote $\xi=[\xi_1,\hdots,\xi_N]\in\real^{nN}$ and $u=[u_1,\hdots,u_N]\in\real^{mN}$. 
We assume that~\eqref{eq:distr-opt-pb} is feasible.

\begin{assumption}\label{as:feasibility-no-constraint-mismatch-vars}
    Problem~\eqref{eq:distr-opt-pb} is feasible for all $\xi\in\real^{nN}$.
\end{assumption}
\smallskip

Assumption~\ref{as:feasibility-no-constraint-mismatch-vars} is necessary for the solution of~\eqref{eq:distr-opt-pb} to be well defined for all $\xi \in R^{nN}$. If the constraints of~\eqref{eq:distr-opt-pb} are defined by CBFs, such as those in~\eqref{eq:diff-drive-pb}, their joint feasibility can be characterized, cf.~\cite{PM-JC:23-csl,XX:18,XT-DVD:22-cdc}.

We let $u^*:\real^{nN}\to\real^{mN}$ be the function mapping each $\xi\in\real^{nN}$ to the solution of~\eqref{eq:distr-opt-pb}. To tackle the design of our distributed algorithm, we first deal with the coupling induced by the inequality constraints by introducing \textit{constraint mismatch variables} $z_i^k$ for each agent and constraint in which it is involved. We use the same notation as in Section~\ref{sec:prelims}
and reformulate~\eqref{eq:distr-opt-pb} as
\begin{align}\label{eq:cbf-distr-opt-pb-y}
  & \min \limits_{u\in\real^{mN}, z\in\real^q} \sum_{i=1}^N f_i(\xi_i, u_i),
  \\
  \notag
    & \text{s.t.} \quad g_i^k(\xi_{\Nc_i}, u_i)+ \! \! \! \! \! \! \! \sum_{j\in \Nc_i\cap V(\Gc_k)} \! \! \! \! (z_i^k \! - \! z_j^k) \leq0, \ i \! \in \! V(\Gc_k), \ k \! \in \! [p].
\end{align}
In order to facilitate the analysis of the convergence properties of the algorithms that will follow, we regularize~\eqref{eq:cbf-distr-opt-pb-y} by adding the term $\epsilon \sum_{k=1}^p \sum_{j\in V(\Gc_k)} (z_j^k)^2$, with $\epsilon>0$, in the objective function of~\eqref{eq:cbf-distr-opt-pb-y}. Details regarding this regularization are covered in the Appendix.
With the added regularization term,  the problem~\eqref{eq:cbf-distr-opt-pb-y} has a strongly convex objective function and convex constraints, and therefore has a unique optimizer for every $\xi \in\real^{nN}$.
Let $u^{*,\epsilon}:\real^{nN}\to\real^{mN}$ and $z^{*,\epsilon}:\real^{nN}\to\real^q$ be the functions mapping each $\xi\in\real^{nN}$ to the corresponding unique optimizers in $u$ and $z$, respectively, of the regularized problem. We also let $\lambda^{*,\epsilon}:\real^{nN}\to\real^q$ be the function mapping each $\xi\in\real^{nN}$ to the optimal Lagrange multiplier of the regularized problem.
Problem~\eqref{eq:cbf-distr-opt-pb-y} (or its regularized version) cannot be decoupled into $N$ local optimization problems because the
optimal values of $z$ in~\eqref{eq:cbf-distr-opt-pb-y} require coordination among agents.
However, for every fixed $z$,~\eqref{eq:cbf-distr-opt-pb-y} can be solved locally by agent $i$ by just optimizing over $u_i$ as follows:
\begin{align}\label{eq:cbf-distr-opt-pb-y-local-only-u}
  & \min \limits_{u_i\in\real^m} f_i(\xi_i, u_i),
  \\
  \notag
    & \text{s.t.} \quad g_i^k(\xi_{\Nc_i}, u_i)+\sum_{j\in \Nc_i \cap V(\Gc_k)}(z_i^k-z_j^k) \leq0, \quad \ k\in[p].
\end{align}
We let $\bar{u}_i:\real^{n|\Nc_i|}\times\real^{|\Nc_i|}\to\real^m$ be the function that maps every $(\xi_{\Nc_i},z_{\Nc_i})\in\real^{n |\Nc_i|}\times\real^{\sum_{j\in\Nc_i}|P_j|}$ to 
the optimizer of~\eqref{eq:cbf-distr-opt-pb-y-local-only-u}, and $\bar{u}=[\bar{u}_1,\hdots,\bar{u}_N]$.
By construction, the controller $\bar{u}_i$ is \textbf{distributed}, as it only depends on variables which can be obtained through communication with agents in $\Nc_i$.

However, $\bar{u}_i$ depends on the chosen value of $z$. Since~\eqref{eq:cbf-distr-opt-pb-y} contains a minimization over both $u$ and $z$, $\bar{u}_i$ coincides with the optimizer over $u$ of~\eqref{eq:distr-opt-pb}, one must also optimize over the \textit{constraint mismatch} variables~$z$. We do this by updating them with the projected saddle-point dynamics of the regularization of~\eqref{eq:cbf-distr-opt-pb-y}, cf. Figure~\ref{fig:scheme}.
\begin{figure}
    \centering
    \includegraphics[width = 0.4\textwidth]{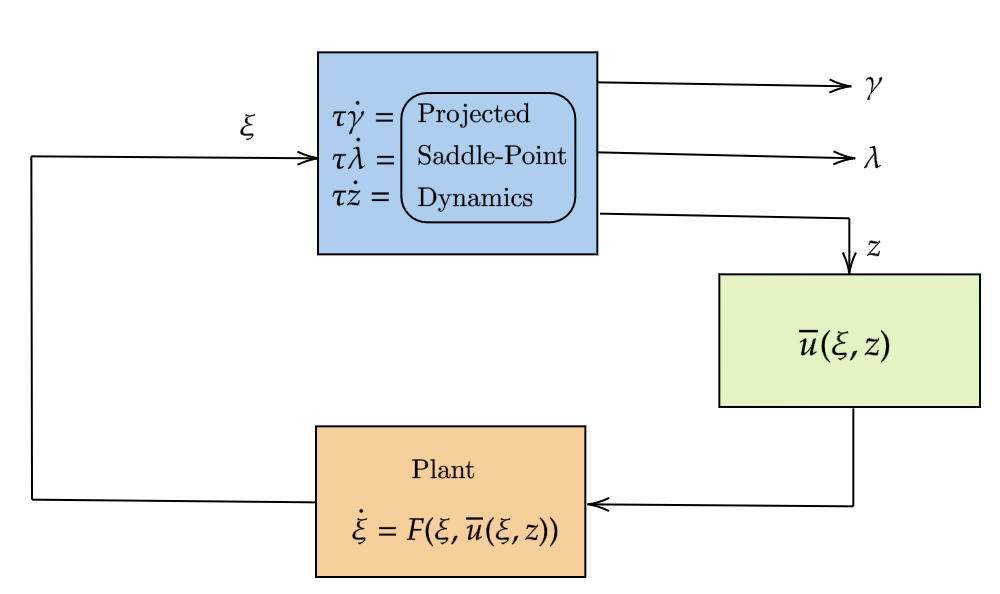}
    \caption{Block diagram of~\eqref{eq:alg-sp-plant}. The blue block updates the variables $\gamma$, $z$ and $\lambda$ using the projected saddle-point dynamics of the regularized version of~\eqref{eq:cbf-distr-opt-pb-y}. In parallel, the plant is updated using the controller $\bar{u}$.}
    \label{fig:scheme}
    \vspace{-3ex}
\end{figure} 

%
%
%
%
We assume that the projected saddle-point dynamics can be run at a faster rate than the plant.
Introducing a timescale separation parameter $\tau>0$ to model this, the interconnection of the projected saddle-point dynamics with the plant leads to the following dynamical system:
\begin{subequations}
\begin{align}
  \tau\dot{\gamma}_i &= -\nabla_{\gamma_i} f_i(\xi_i, \gamma_i)- \sum_{ k\in P_i } \lambda_i^k \nabla_{\gamma_i} g_i^k (\xi_{\Nc_i}, \gamma_i),~\label{alg-v} \\
    \tau\dot{z}_i^k &=-\epsilon z_i^k -\sum_{j\in \Nc_i \cap \Gc_k} (\lambda_i^k-\lambda_j^k),~\label{alg-y} \\
    \tau\dot{\lambda}_i^k &= [g_i^k(\xi_{\Nc_i}, \gamma_i)+\sum_{j\in \Nc_i \cap \Gc_k}(z_i^k-z_j^k)]_{\lambda_i^k}^+~\label{alg-lambda}, \\
  \dot{\xi}_i &= F_i(\xi_i,\bar{u}_i(\xi_i,z_{\Nc_i})),~\label{alg-x}
\end{align}
\label{eq:alg-sp-plant}
\end{subequations}
for all $i\in[N]$ and $k\in P_i$. We henceforth denote $\gamma=[\gamma_1,\hdots,\gamma_N]$, $\lambda_i=\{ \lambda_i^k \}_{k\in P_i}$ for all $i\in[N]$ and $\lambda=
[\lambda_1,\hdots,\lambda_N]$.
The auxiliary variables $\gamma$ and $\lambda$ play the role of the primal variable $u$ and the Lagrange multipliers of the constraints in~\eqref{eq:cbf-distr-opt-pb-y}, respectively.
Our proposed algorithmic solution is the controller $\bar{u}_i$ for all $i\in[N]$ implemented for the current value of the state variables $\xi_i$ and $z_{\Nc_i}$ in~\eqref{eq:alg-sp-plant}. 

\section{Analysis of the Solution: Distributed Character, Safety, and
  Stability}\label{sec:analysis}

In this section we establish the properties of the controller proposed in Section~\ref{sec:proposed-solution}.
Throughout this section we use the same communication graph $\Gc$ introduced in Section~\ref{sec:proposed-solution}.
%
%
First we introduce various assumptions regarding problem~\eqref{eq:cbf-distr-opt-pb-y-local-only-u} and discuss their sensibleness.

\begin{assumption}\longthmtitle{Feasibility of optimization problem}\label{as:feasibility-opt-pb}
    For all $i\in[N]$, the optimization problem~\eqref{eq:cbf-distr-opt-pb-y-local-only-u} is feasible for all $\xi_{\Nc_i}\in\real^{n |\Nc_i|}$ and $z_{\Nc_i}\in\real^{\sum_{j\in\Nc_i}|P_j| }$.
\end{assumption}
\smallskip

\begin{remark}\longthmtitle{Handling feasibility in practice}
    Problem~\eqref{eq:cbf-distr-opt-pb-y-local-only-u} is feasible if $z_{\Nc_i}=z_{\Nc_i}^{*,\epsilon}(\xi)$:  $u_i^{*,\epsilon}(\xi)$ is a solution of~\eqref{eq:cbf-distr-opt-pb-y-local-only-u} because of Proposition~\ref{prop:equivalence} and Assumption~\ref{as:feasibility-no-constraint-mismatch-vars}.
   Hence, if $z_{\Nc_i}$ is close to $z_{\Nc_i}^{*,\epsilon}(\xi)$, then~\eqref{eq:cbf-distr-opt-pb-y-local-only-u} is often feasible.
    Moreover, in practice, one can tune $\alpha_{c}^{ij}$ and $\alpha_k$ constraints in~\eqref{eq:diff-drive-pb} only when they are close to being active to reduce the number of overall constraints and facilitate feasibility.
    \demo
\end{remark}


\begin{assumption}\longthmtitle{Availability of optimizer in real time}\label{as:availability-optimizer}
    The function $\bar{u}_i$ is instantaneously available to agent $i$ for all $i\in[N]$.
\end{assumption}
\smallskip 

Assumption~\ref{as:availability-optimizer} is a reasonable abstraction of what happens in practical scenarios, such as~\eqref{eq:diff-drive-pb}, which is a quadratic program and can be solved efficiently~\cite{BS-GB-PG-AB-SB:20}.
In fact, if $M\leq 2$,
$\bar{u}$ can even be found in closed form~\cite[Theorem 1]{XT-DVD:24}.

\begin{assumption}\longthmtitle{Lipschitzness of optimizer}\label{as:lipschitzness-optimizer}
    The functions $u^{*,\epsilon}$, $z^{*,\epsilon}$ and $\bar{u}$ are locally Lipschitz.
\end{assumption}

\begin{remark}\longthmtitle{Conditions that ensure Lipschitzness}
    The works~\cite{BJM-MJP-ADA:15,PM-AA-JC:23-scl} study different conditions under which the solution of parametric optimization problems such as~\eqref{eq:cbf-distr-opt-pb-y-local-only-u} or~\eqref{eq:cbf-distr-opt-pb-y} is locally Lipschitz. 
    \demo
\end{remark}

We next show that the controller $\bar{u}$ is \textbf{safe} and \textbf{asymptotically} converges to $u^{*,\epsilon}$ when implemented in conjunction with the projected saddle-point dynamics as in~\eqref{eq:alg-sp-plant}. This, together with its distributed character, means that it meets all the desired properties. For to the problem described in Section~\ref{sec:problem-statement}, this means that $\bar{u}$ achieves obstacle and inter-agent collision avoidance, and asymptotically converges to the closest controller to $u_{\text{nom}}=[u_{\text{nom},1},\hdots,u_{\text{nom},N}]$ that satisfies the safety constraints. In particular, if the agents are far from any of the obstacles in the environment and the CBF constraints in~\eqref{eq:diff-drive-pb} are inactive, $\bar{u}$ converges to $u_{\text{nom}}$ and steers the \textit{leader} towards the desired waypoint and the \textit{followers} towards their formation positions.

\begin{proposition}\longthmtitle{Convergence of algorithm}\label{prop:alg-convergence}
Suppose that Assumptions~\ref{as:feasibility-no-constraint-mismatch-vars}-\ref{as:lipschitzness-optimizer} hold. Then,
\begin{enumerate}
    \item\label{it:prop-first} the controller $\bar{u}$ is \textbf{safe}, i.e., if the initial conditions of~\eqref{eq:alg-sp-plant} are such that
    $g_i^k(\xi_{\Nc_i}(0),\bar{u}_i(\xi_i(0), z_{\Nc_i}(0)) )\leq 0$ for all $k\in[p]$, then
    the trajectories of~\eqref{eq:alg-sp-plant} satisfy
    \begin{align}\label{eq:ubar-safe}
        \sum_{i\in V(\Gc_k)} g_i^k(\xi_{\Nc_i}(t),\bar{u}_i(\xi_i(t), z_{\Nc_i}(t)) )\leq 0,
    \end{align}
    for all $k\in[p]$ and $t\geq0$;
    %
    %
    \item\label{it:prop-second} if the origin is asymptotically stable for the dynamical system $\dot{\xi}=F(\xi,u^{*,\epsilon}(\xi))$ with Lyapunov function $V:\real^n\to\real$,
    and $\Gamma$ is a Lyapunov sublevel set of $V$ contained in its region of attraction,
    then, for any initial condition of~\eqref{eq:alg-sp-plant} $c_0:=(\gamma_0,z_0,\lambda_0,\xi_0)\in\real^{nN}\times\real^{q}\times\real^{q}\times\real^{nN}$, with $\xi_0\in\Gamma$, 
    $\delta>0$ and $\epsilon>0$, there exist $\tau_{c_0,\delta,\epsilon} > 0$ and $r_{\xi_0}$ such that, if $\max \{ \norm{\gamma_0-u^{*,\epsilon}(\xi_0)}, \norm{z_0-z^{*,\epsilon}(\xi_0)},  \norm{\lambda_0-\lambda^{*,\epsilon}(\xi_0)} \} < r_{\xi_0}$, and $\tau < \tau_{c_0,\delta,\epsilon}$, then the trajectories of~\eqref{eq:alg-sp-plant} are such that, for all $t>0$,
%
%
\begin{align*}
    &\norm{\gamma(t)\!-\!u^{*,\epsilon}(\xi(t))} \! \leq \! \delta, \ \norm{z(t)\!-\!z^{*,\epsilon}(\xi(t))} \! \leq \! \delta, 
    \\
    &\norm{\lambda(t) \! - \! \lambda^{*,\epsilon}(\xi(t))} \! \leq \! \delta, \ \norm{\bar{u}(\xi(t),z(t))\!-\! u^{*,\epsilon}(\xi(t))} \! \leq \! \delta,
\end{align*}
and there exists $T_{c_0,\delta,\epsilon}>0$ such that $\norm{\xi(t)} \leq \delta$ for all $t > T_{c_0,\delta,\epsilon}>0$.
\end{enumerate}
\end{proposition}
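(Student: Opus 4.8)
The plan is to establish claim~\ref{it:prop-first} by an exact graph-theoretic cancellation and claim~\ref{it:prop-second} by a two-timescale (singular-perturbation) analysis of~\eqref{eq:alg-sp-plant} around its quasi-steady-state manifold. For claim~\ref{it:prop-first}: by Assumption~\ref{as:feasibility-opt-pb} the local problem~\eqref{eq:cbf-distr-opt-pb-y-local-only-u} is feasible for every $(\xi_{\Nc_i},z_{\Nc_i})$, so along any trajectory of~\eqref{eq:alg-sp-plant} the controller $\bar{u}_i(\xi_i(t),z_{\Nc_i}(t))$ is well defined and, being the minimizer of~\eqref{eq:cbf-distr-opt-pb-y-local-only-u}, is feasible for it; that is,
\[
g_i^k\big(\xi_{\Nc_i}(t),\bar{u}_i(\xi_i(t),z_{\Nc_i}(t))\big)+\sum_{j\in\Nc_i\cap V(\Gc_k)}\big(z_i^k(t)-z_j^k(t)\big)\leq 0
\]
for all $i\in V(\Gc_k)$, $k\in[p]$, $t\geq 0$. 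Summing over $i\in V(\Gc_k)$ and using that $\Gc_k$ is undirected, so each edge $\{i,j\}$ contributes $(z_i^k-z_j^k)+(z_j^k-z_i^k)=0$ to the resulting double sum, the mismatch terms cancel and we obtain~\eqref{eq:ubar-safe} for all $k\in[p]$, $t\geq 0$ (the conclusion in fact holds along every trajectory, feasibility of~\eqref{eq:cbf-distr-opt-pb-y-local-only-u} alone sufficing; the hypothesis on the initial condition simply matches the usual forward-invariance phrasing of safety).

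For claim~\ref{it:prop-second} I would first carry out the singular-perturbation decomposition of~\eqref{eq:alg-sp-plant}. Freezing $\xi$ and rescaling time by $s=t/\tau$, the subsystem~\eqref{alg-v}--\eqref{alg-lambda} becomes exactly the projected saddle-point dynamics~\eqref{eq:projected-saddle-point-dyn} of the $\epsilon$-regularized problem~\eqref{eq:cbf-distr-opt-pb-y}, with $(\gamma,z)$ in the role of the primal variables and $\lambda$ of the multipliers; by Assumption~\ref{as:regularity-convexity-optimization-pb} and the regularization this problem has a strongly convex objective in $(u,z)$ and convex constraints, so by~\cite[Theorem 5.1]{AC-EM-SHL-JC:18-tac} its unique saddle point $\big(u^{*,\epsilon}(\xi),z^{*,\epsilon}(\xi),\lambda^{*,\epsilon}(\xi)\big)$ is globally asymptotically stable for this boundary-layer system. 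Using Assumptions~\ref{as:regularity-convexity-optimization-pb} and~\ref{as:lipschitzness-optimizer} one upgrades this to stability that is uniform in $\xi$ on compact sets, ideally through a Lyapunov function for the projected saddle-point dynamics whose decrease is uniform in $(\xi,\epsilon)$ and quadratic near the saddle point, exploiting strong convexity of the Lagrangian in $(u,z)$. On the slow manifold $(\gamma,z,\lambda)=\big(u^{*,\epsilon}(\xi),z^{*,\epsilon}(\xi),\lambda^{*,\epsilon}(\xi)\big)$ one has, by Proposition~\ref{prop:equivalence} together with Assumption~\ref{as:feasibility-no-constraint-mismatch-vars}, that $u_i^{*,\epsilon}(\xi)$ solves~\eqref{eq:cbf-distr-opt-pb-y-local-only-u} when $z_{\Nc_i}=z_{\Nc_i}^{*,\epsilon}(\xi)$, hence $\bar{u}_i(\xi_i,z_{\Nc_i}^{*,\epsilon}(\xi))=u_i^{*,\epsilon}(\xi)$ by uniqueness; therefore the reduced (slow) dynamics are precisely $\dot\xi=F(\xi,u^{*,\epsilon}(\xi))$, for which, by hypothesis, the origin is asymptotically stable with Lyapunov function $V$ and region of attraction containing the sublevel set $\Gamma$.

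I would then analyze the interconnection in error coordinates $\tilde\gamma=\gamma-u^{*,\epsilon}(\xi)$, $\tilde z=z-z^{*,\epsilon}(\xi)$, $\tilde\lambda=\lambda-\lambda^{*,\epsilon}(\xi)$, which bring~\eqref{eq:alg-sp-plant} into standard singularly-perturbed form: $\tau$ times the derivative of the error block equals the boundary-layer vector field plus an $O(\tau)$ term arising from $\tfrac{d}{dt}u^{*,\epsilon}(\xi(t))$, $\tfrac{d}{dt}z^{*,\epsilon}(\xi(t))$, $\tfrac{d}{dt}\lambda^{*,\epsilon}(\xi(t))$ (finite by Assumption~\ref{as:lipschitzness-optimizer} and local Lipschitzness of $F$), while $\dot\xi=F(\xi,u^{*,\epsilon}(\xi))+\big[F(\xi,\bar{u}(\xi,z))-F(\xi,u^{*,\epsilon}(\xi))\big]$ with the bracket bounded by a class-$\Kc$ function of $\norm{\tilde z}$ via Assumption~\ref{as:lipschitzness-optimizer}. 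A composite Lyapunov function $\nu=V(\xi)+c\,W(\xi,\tilde\gamma,\tilde z,\tilde\lambda)$ with small $c>0$ (equivalently, a sequential Tikhonov-type estimate) then yields, for $\tau$ below a threshold $\tau_{c_0,\delta,\epsilon}$ and the error block initially within $r_{\xi_0}$ of the origin, that: the error block enters and stays in an $O(\tau)$-ball, so for $\tau$ small enough (using Lipschitzness of $\bar{u}$) all four error norms in the statement remain $\leq\delta$ for all $t>0$; and, the errors being small, $\xi$ evolves as a small perturbation of $\dot\xi=F(\xi,u^{*,\epsilon}(\xi))$, so $V$ decreases, $\xi(t)$ never leaves $\Gamma$ (keeping all standing assumptions in force along the whole trajectory), and $\xi$ enters any prescribed neighborhood of the origin in finite time, giving $\norm{\xi(t)}\leq\delta$ for $t>T_{c_0,\delta,\epsilon}$.

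The main obstacle is the singular-perturbation step itself. The boundary-layer vector field is discontinuous (the projection $[\cdot]^{+}$ in~\eqref{alg-lambda} jumps on $\{\lambda_i^k=0\}$), so classical smooth singular-perturbation theorems do not apply verbatim; one must instead work directly with a Lyapunov certificate for the projected saddle-point dynamics and bound, uniformly in $\tau$ and in the error norms, both the interconnection terms and the motion of the moving equilibrium $\big(u^{*,\epsilon},z^{*,\epsilon},\lambda^{*,\epsilon}\big)(\xi)$, adapting the timescale-separation machinery for projected dynamics in closed loop with a plant (cf.~\cite{AA-JC:24-tac}). A secondary but indispensable point is the invariance claim that $\xi(t)$ stays in $\Gamma$ for all $t$ --- needed so that feasibility (Assumption~\ref{as:feasibility-opt-pb}) and Lipschitzness (Assumption~\ref{as:lipschitzness-optimizer}) hold along the entire trajectory --- which is exactly where the hypothesis that $\Gamma$ is a Lyapunov sublevel set inside the region of attraction enters.
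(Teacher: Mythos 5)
Your proposal follows essentially the same route as the paper's proof, with one divergence at the very end that is worth spelling out. Part (i) is identical: feasibility of the local problem~\eqref{eq:cbf-distr-opt-pb-y-local-only-u} gives the per-agent inequality with mismatch terms, and summing over $i\in V(\Gc_k)$ yields~\eqref{eq:ubar-safe}; your explicit observation that the mismatch terms telescope over the undirected edges of $\Gc_k$, and that the hypothesis on the initial condition is not actually used, is a correct (and slightly more careful) rendering of what the paper leaves implicit. For part (ii), your two-timescale decomposition coincides with the paper's: the same boundary layer (the projected saddle-point dynamics of the regularized problem, globally asymptotically stable by \cite[Theorem 5.1]{AC-EM-SHL-JC:18-tac}, with uniqueness of the fast equilibrium from strong convexity), the same identification of the reduced model $\dot{\xi}=F(\xi,u^{*,\epsilon}(\xi))$ via $\bar{u}(\xi,z^{*,\epsilon}(\xi))=u^{*,\epsilon}(\xi)$, and the same role for Assumptions~\ref{as:feasibility-opt-pb} and~\ref{as:lipschitzness-optimizer}. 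Where you differ is the step you yourself flag as the main obstacle: because of the projection in~\eqref{alg-lambda} the smooth Tikhonov theorem does not apply, and you propose to rebuild the interconnection estimate by hand with error coordinates and a composite Lyapunov function. The paper instead closes the argument by citing an off-the-shelf Tikhonov-type result for differential inclusions, \cite[Theorem 3.1 and Corollary 3.4]{FW:05}, so its proof reduces to verifying that theorem's hypotheses (well-posedness, local Lipschitzness, uniqueness and uniform asymptotic stability of the fast equilibrium, Lipschitzness and asymptotic stability of the reduced model), adapted to a bounded region of attraction. Your hand-rolled route is viable but substantially heavier --- you would need a boundary-layer Lyapunov certificate for the projected dynamics whose decrease is uniform in the frozen state, together with a bound on the drift of the moving equilibrium $(u^{*,\epsilon},z^{*,\epsilon},\lambda^{*,\epsilon})(\xi(t))$ --- and that is precisely the technical content packaged in \cite{FW:05}; invoking that result (or the projected-dynamics timescale-separation machinery of \cite{AA-JC:24-tac} that you mention) is the economical way to finish.
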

\begin{proof}
First we show~\ref{it:prop-first}. By definition of $\bar{u}$, we have
\begin{align}\label{eq:first-proof}
    g_i^k(\xi_{\Nc_i}(t), \bar{u}_i(\xi_i(t), z_{\Nc_i}(t) ) )+ \! \! \sum_{j\in \Nc_i \cap V(\Gc_k)} \! \!(z_i^k(t)-z_j^k(t)) \leq0,
\end{align}
for all $i\in[N]$, $k\in[p]$ and $t\geq0$. Adding~\eqref{eq:first-proof} for all $i\in V(\Gc_k)$, we obtain~\eqref{eq:ubar-safe} for all $k\in[p]$ and $t\geq0$.
Next we show~\ref{it:prop-second}.
Since the dynamics in~\eqref{eq:alg-sp-plant} are not differentiable due to the presence of the $[\cdot]_{+}$ operator, the standard version of Tikhonov's theorem for singular perturbations~\cite[Theorem 11.2]{HK:02} is not applicable. Instead we use~\cite{FW:05}, which gives a Tikhonov-type singular perturbation statement for differential inclusions.
For non-smooth ODEs like~\eqref{eq:alg-sp-plant} we need to check the following assumptions. 
First, the dynamics~\eqref{eq:alg-sp-plant} are well-defined because Assumption~\ref{as:feasibility-opt-pb} guarantees that $\bar{u}(\xi,z)$ is well-defined for all $\xi\in\real^{nN}$ and $z\in\real^{q}$.
Second, the dynamics~\eqref{eq:alg-sp-plant} are locally Lipschitz because of the Lipschitzness of the gradients of $f$ and $g$, and the $\textit{max}$ operator, as well as the Lipschitzness of $F_i$ for all $i\in\real^n$ and assumption~\ref{as:lipschitzness-optimizer}.
Third, the existence and uniqueness of the equilibrium of the fast dynamics
%
%
follows from the fact that~\eqref{eq:cbf-distr-opt-pb-y-epsilon} has a strongly convex objective function and convex constraints, which implies that it has a unique KKT point.
Fourth, Lipschitzness and asymptotic stability of the reduced-order model
\begin{align}\label{eq:rom}
    \dot{\xi}=F(\xi,\bar{u}(\xi,z^{*,\epsilon}(\xi))).
\end{align}
Lipschitzness follows from Assumption~\ref{as:lipschitzness-optimizer}, and asymptotic stability follows from the fact that $\bar{u}(\xi,z^{*,\epsilon}(\xi))=u^{*,\epsilon}(\xi)$ for all $\xi\in\real^n$ (cf. Proposition~\ref{prop:equivalence}) and the hypothesis that the origin of $\dot{\xi}=F(\xi,u^{*,\epsilon}(\xi))$ is asymptotically stable.
Fifth, the asymptotic stability of the fast dynamics for every fixed value of the slow variable follows from~\cite[Theorem 5.1]{AC-EM-SHL-JC:18-tac}. Finally, the origin is the only equilibrium of~\eqref{eq:rom} by assumption.
The result follows from~\cite[Theorem 3.1 and Corollary 3.4]{FW:05}, by adapting the results therein to the case where the origin of~\eqref{eq:rom} has a bounded region of attraction. 
%
%
\end{proof}

%
%

If the constraints in~\eqref{eq:distr-opt-pb} correspond to the CBF conditions of some set, as it is the case in~\eqref{eq:diff-drive-pb}, Proposition~\ref{prop:alg-convergence}(i) implies that the set is forward invariant under the dynamics~\eqref{eq:alg-sp-plant}.
Moreover, if $\dot{\xi}=F(\xi,u^{*,\epsilon}(\xi))$ is globally asymptotically stable, then Proposition~\ref{prop:alg-convergence}(ii) holds for any $\xi_0\in\real^n$.

\begin{remark}\longthmtitle{Proximity of the constraint mismatch variables to their optimal values and choice of timescale}\label{rem:closeness-cm-optimizer}
    Proposition~\ref{prop:alg-convergence} requires that the initial conditions of $\gamma$,
    $z$ and $\lambda$ are close enough to $\gamma^{*,\epsilon}(\xi_0), z^{*,\epsilon}(\xi_0)$ and $\lambda^{*,\epsilon}(\xi_0)$. This is due to the technical nature of the proof of~\cite[Theorem 3.1]{FW:05}, which requires the fast variables to be in a small ball around the solution manifold to show the stability of the interconnected system.
    The satisfaction of these conditions can be achieved by exploiting the asymptotic stability properties of the projected saddle-point dynamics and running them for the regularized version of~\eqref{eq:cbf-distr-opt-pb-y}
    offline for a fixed value of the state $\xi$ equal to $\xi_0$ within an accuracy smaller than~$r_{\xi_0}$.
    Proposition~\ref{prop:alg-convergence} also requires $\tau$ to be sufficiently small. In practice we have observed that convergence of the state variables is achieved for a wide range of values of $\tau$. Moreover, since safety holds for any $\tau$, the value of $\tau$ can be decreased during the execution of the algorithm to ensure convergence to the desired waypoint.
    %
    %
    \demo
\end{remark}

\begin{remark}\longthmtitle{Asymptotic stability assumption}\label{rem:asymptotic-stab-plant}
Proposition~\ref{prop:alg-convergence}~\ref{it:prop-second} requires that the controller $u^{*,\epsilon}$ is asymptotically stabilizing. In the context of the problem outlined in Section~\ref{sec:problem-statement}, this means that in a neighborhood of the origin,
enforcing the obstacle avoidance and inter-agent collision avoidance constraints does not disrupt the stabilizing character of the nominal controllers (i.e., their steering towards the desired waypoints or desired formation positions of interest), which can be achieved by taking the values of $\alpha_c^{ij}$ and $\alpha_k$ sufficiently large.
Recent work~\cite{WSC-DVD:22-tac,PM-JC:23-csl} gives conditions under which the solution of a CBF-based QP of the form~\eqref{eq:diff-drive-pb} with a nominal stabilizing controller retains its stability properties. Such conditions can be used to derive a subset contained in the region of attraction of the origin, in which the result in Proposition~\ref{prop:alg-convergence}~\ref{it:prop-second} can be applied.
    \demo
\end{remark}

\section{Experimental validation}\label{sec:experiments}

Here we show the performance of the proposed distributed control design~\eqref{eq:alg-sp-plant} in simulation and in physical robotic platforms for a team of differential-drive robots, cf. Section~\ref{sec:problem-statement}.

\subsection{Parameter Tuning}
Our experiments underscore the sensitivity of the control design to the choice of parameters. In particular, 
certain sequences of waypoints
might lead to some of the agents of the team reaching deadlocks near the obstacles. This is a well-known issue of CBF-based controllers, cf.~\cite{MFR-APA-PT:21,PM-JC:23-csl}. In practice, we have observed that this behavior can be avoided by choosing a sequence of waypoints such that straight-lines connecting consecutive waypoints lie in the safe region, and promoting larger values of the angular velocity input, which allow the vehicle more \textit{manoeuvrability}, by selecting the matrix $\Gamma_i$ as the constant matrix $[5, 0; 0, 1]$).
We keep the other design parameters constant across the different experiments, with values $l=0.2$, $\alpha_c^{ij}=2.0$ for all $i\in[N]$, $j\in N_i$, $\alpha_k=2.0$ for all $k\in[M]$, $d_{\min}=1.0$, $\eta_k=1.5$ for all $k\in[M]$, $\epsilon=0.001$, and $\tau=0.1$. Given the initial condition $x(0)$, we follow the procedure in Remark~\ref{rem:closeness-cm-optimizer} to initialize the variables $\gamma, z$ and $\lambda$ in~\eqref{eq:alg-sp-plant} before executing the controller. For each robot $i$, the set $N_i$ is taken as the two closest robots to agent $i$ in the initial positions. 

\subsection{Experiments}
We have tested our algorithm in different simulation and hardware environments.
For the simulation environments we have employed a high-fidelity Unity simulator
on an Ubuntu Laptop with Intel Core i7-1355U (4.5 GHz). 
We numerically integrate~\eqref{eq:alg-sp-plant} and implement $\bar{u}$ using the convex optimization library \texttt{CVXOPT}~\cite{MSA-JD-LV:13}.
The first simulation environment consists of a series of red cilindrical, cubic, and spherical obstacles. A team of 5 Husky\footnote{Spec. sheets for the Husky and Jackal robots can be found at https://clearpathrobotics.com} robots
%
%
initially in a x-like formation traverses the environment while avoiding collision with obstacles and other robots of the team\footnote{Video of the simulation: https://tinyurl.com/distributedcbfs}. The simulated robots have the same LIDAR and sensor capabilities as the real ones, and these are used to run a SLAM system that allows each robot to localize itself in the environment and obtain its current state, which is needed to run~\eqref{eq:alg-sp-plant} and implement $\bar{u}$.
Figure~\ref{fig:cbf-env-snapshots} shows 3 snapshots of the experiment and the trajectories followed by the robots.
The robots successfully complete the task by avoiding collisions and reaching all the waypoints while maintaining the desired formation. Figure~\ref{fig:cbf_5huskies} shows the evolution of the distance to the desired formation position for each follower and the distance to the different obstacles for the leader agent.

\begin{figure*}[htb]
  \centering
  {\includegraphics[width=0.27\linewidth]{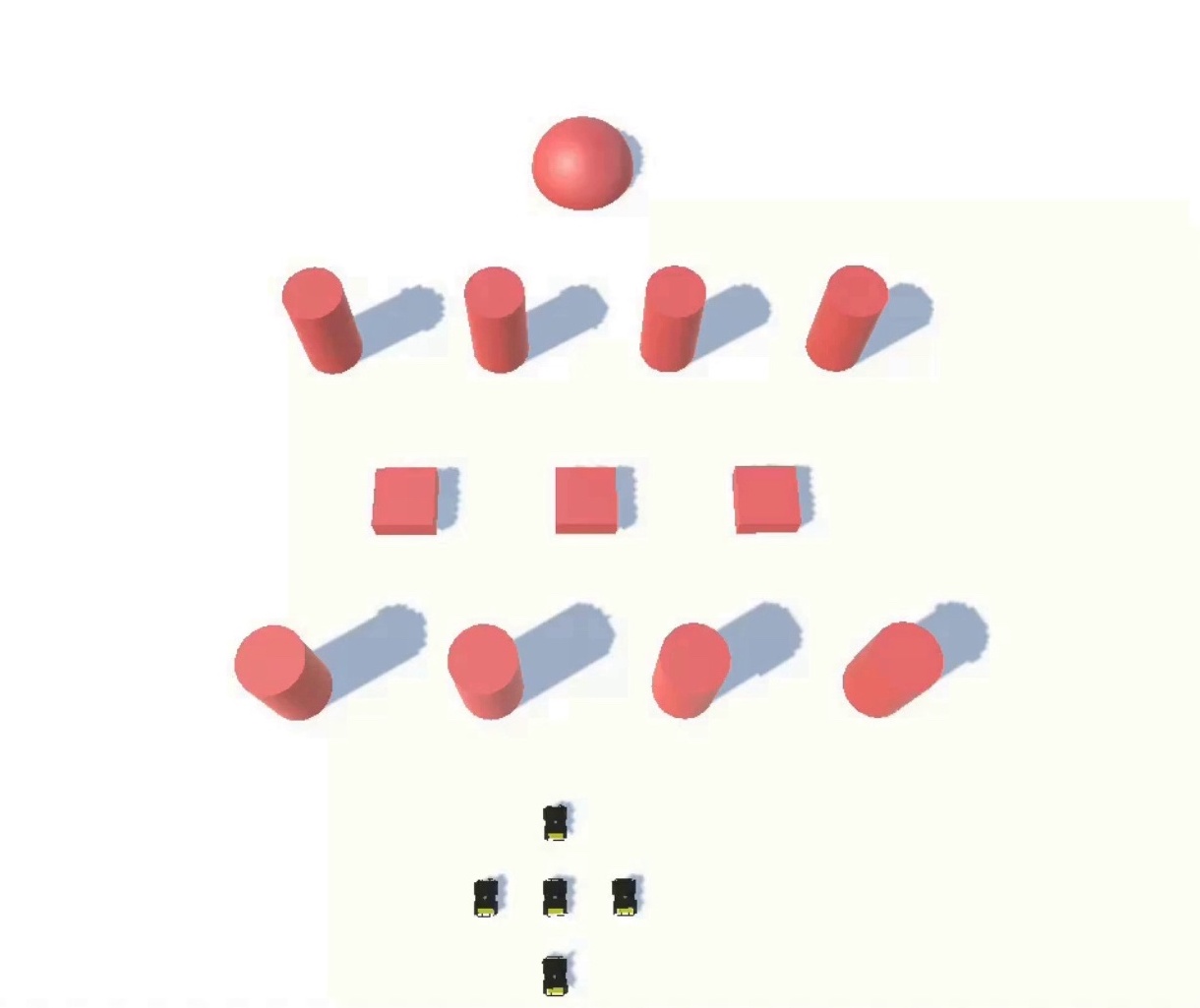}}
  \quad
  {\includegraphics[width=0.27\linewidth]{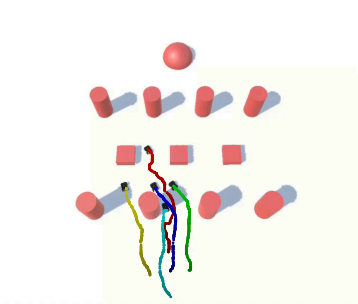}}
  \quad 
  {\includegraphics[width=0.27\linewidth]{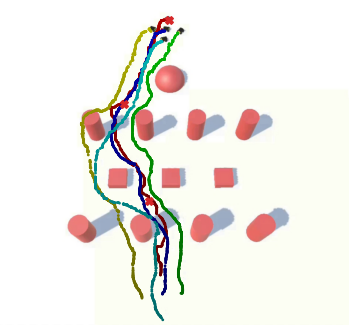}}
  \caption{Snapshots of the first simulation environment with color coded trajectories for the different robots. The intensity of the color decreases with time. In the last snapshot, the red x's indicate the three different waypoints for the leader of the team (in magenta). The environment has dimensions $20$m $\times$ $30$m.}
  \label{fig:cbf-env-snapshots}
\end{figure*}

\begin{figure}[htb]
  \centering
  \includegraphics[width=0.49\linewidth]{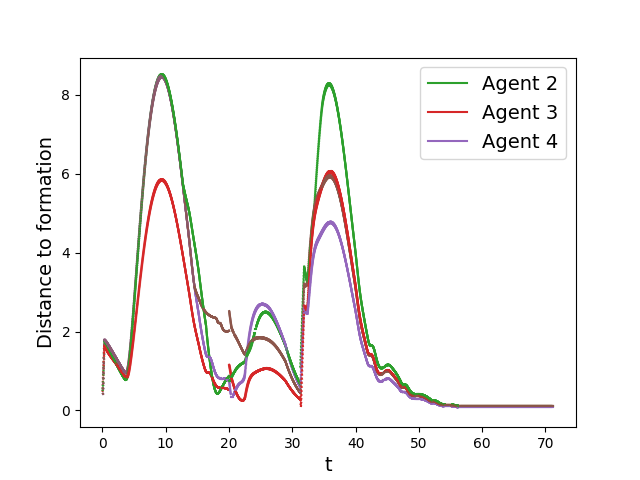}
  \includegraphics[width=0.49\linewidth]{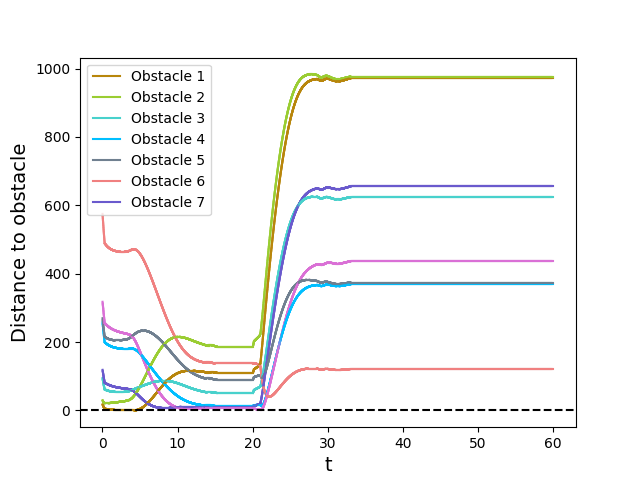}
  \caption{Evolution in the first simulation environment, cf. Figure~\ref{fig:cbf-env-snapshots}. (left): Distance to the desired formation position over time for the different followers. 
  (right): Distance to the different obstacles over time for the \textit{leader}.}
  \label{fig:cbf_5huskies}
\end{figure}

In the second simulation, initially a team of three Husky robots are located in one of the rooms in the environment, cf. Figure~\ref{fig:office-env-snapshots}(left). The team traverses the environment while maintaining a triangular formation.
The walls are modelled as obstacles using a set of ellipsoidal barrier functions.
The team successfully completes the task by avoiding collisions while maintaining the desired formation.

\begin{figure*}[htb]
  \centering
  {\includegraphics[width=0.27\linewidth]{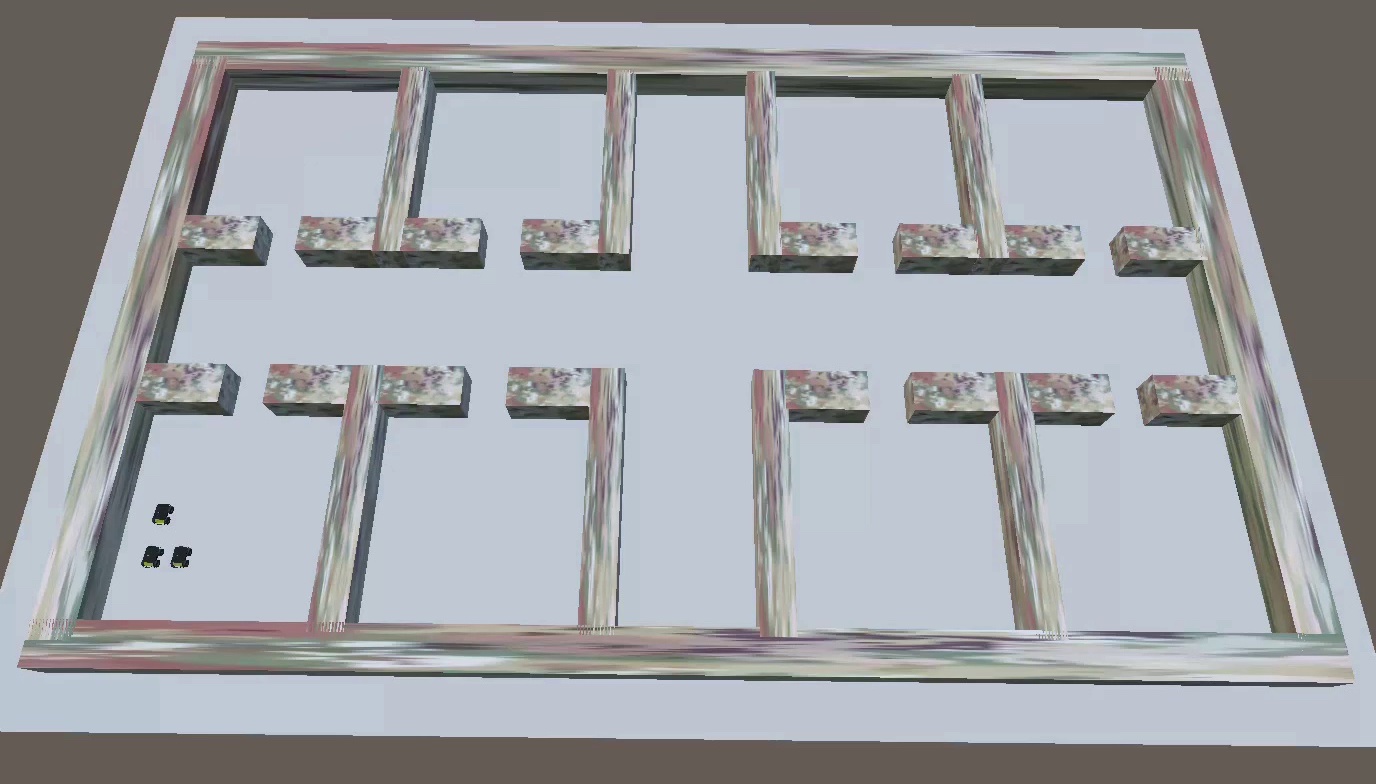}}
  \quad
  {\includegraphics[width=0.27\linewidth]{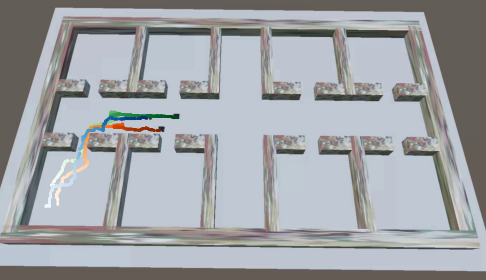}}
  \quad 
  {\includegraphics[width=0.27\linewidth]{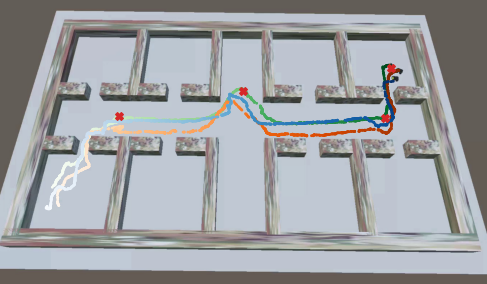}}
  \caption{Snapshots of the second simulation environment with color coded trajectories for the different robots. The intensity of the color increases with time. In the last snapshot, the red x's indicate the four different waypoints for the leader of the team. The environment has dimensions $20$m $\times$ $50$m.}
  \label{fig:office-env-snapshots}
\end{figure*}

We have also validated our design in hardware in a team of 3 Jackal\footnotemark[1] robots with GPS, IMU, and LIDAR sensors, which they use to run a SLAM system to localize itself in the environment.
%
%
The team is initially positioned as shown in Figure~\ref{fig:hardware-snapshots}(left). The blue and grey cilinders are modelled as obstacles using CBFs. The team traverses the environment while maintaining a triangular formation.
All computations are done onboard with the computers of each of the robots.
%
%

\begin{figure*}[htb]
  \centering
  {\includegraphics[width=0.27\linewidth]{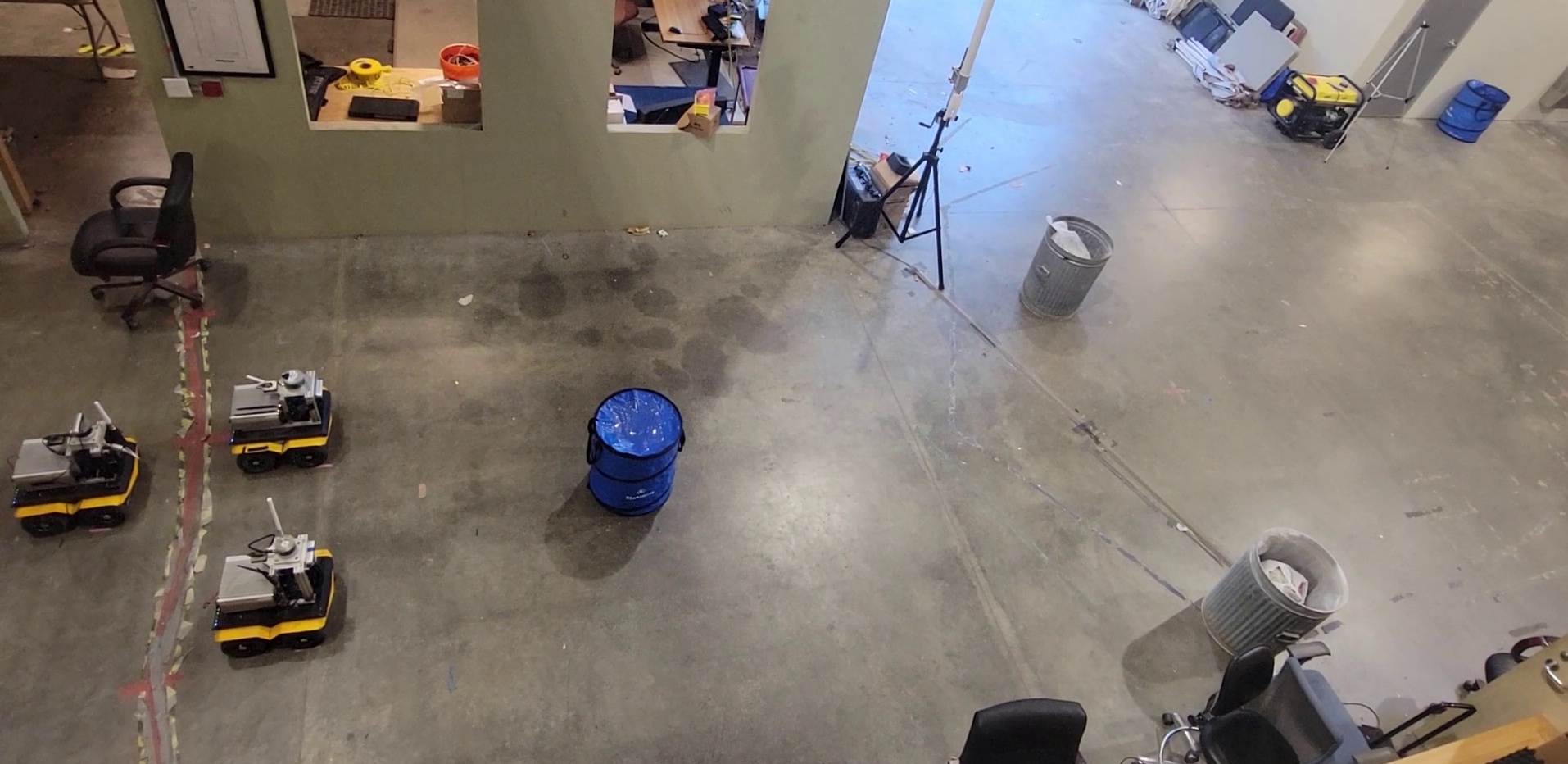}}
  \quad
  {\includegraphics[width=0.27\linewidth]{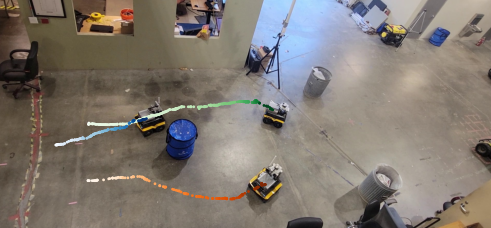}}
  \quad 
  {\includegraphics[width=0.27\linewidth]{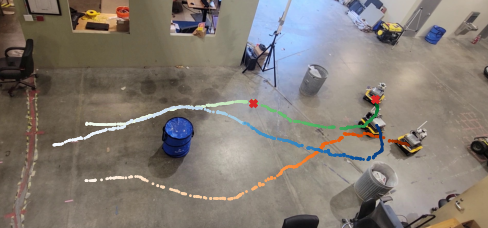}}
  \caption{Snapshots of the hardware experiment with color coded trajectories for the different robots. The intensity of the color increases with time. In the last snapshot, the red x's indicate the two different waypoints for the leader of the team. The environment has dimensions $4$m $\times$ $9$m.}
  \label{fig:hardware-snapshots}
\end{figure*}


\section{Conclusions}\label{sec:conclusions}
We have proposed a distributed controller design based on the solution of state-dependent network optimization problems with coupling constraints under general assumptions. When interconnected with the network dynamics, our controller is guaranteed to converge to the solution of the network optimization problem and satisfy the constraints at all times. We have leveraged this framework to provide a distributed solution that achieves safe navigation for a team of differential-drive robots in environments with obstacles, avoiding collisions and maintaining a desired formation, using CBFs. We have illustrated its performance both in simulation and hardware. Future work will investigate conditions that ensure the feasibility of the optimization problem encoding the control design, determine explicit bounds on the timescale separation required for exact convergence, design sequences of waypoints that ensure the network optimization problem remains feasible,  investigate heuristics to tune the design parameters for improved performance, and explore the extension to settings with partial knowledge of the environment and the obstacles.

\section*{Acknowledgments}
This work was supported by the Tactical Behaviors for Autonomous Maneuver (TBAM) ARL-W911NF-22-2-0231. Pol Mestres did a
research internship at the U.S. Army Combat Capabilities Development
Command Army Research Laboratory in Adelphi, MD during the summer of 2023.


\bibliography{bib/alias,bib/JC,bib/Main-add,bib/Main,bib/New}
\bibliographystyle{IEEEtran}

\appendix

The regularized version of~\eqref{eq:cbf-distr-opt-pb-y} takes the following form:
\begin{align}\label{eq:cbf-distr-opt-pb-y-epsilon}
  & \min \limits_{u\in\real^{mN}, z\in\real^q} \sum_{i=1}^N f_i(\xi_i, u_i) + \epsilon \sum_{k=1}^p \sum_{j\in V(\Gc_k)} (z_j^k)^2,
  \\
  \notag
    & \text{s.t.} \quad g_i^k(\xi_{\Nc_i}, u_i)+ \! \! \! \! \! \! \! \sum_{j\in \Nc_i\cap V(\Gc_k)} \! \! \! \! (z_i^k \! - \! z_j^k) \leq0, \ i \! \in \! V(\Gc_k), \ k \! \in \! [p].
\end{align}

The next sensitivity result shows that the solution to~\eqref{eq:cbf-distr-opt-pb-y-epsilon} and~\eqref{eq:cbf-distr-opt-pb-y} can be made close by choosing $\epsilon$ sufficiently small.

\begin{lemma}\longthmtitle{Sensitivity of regularized problem}\label{lem:sensitivity}
    Let $u^*$ be continuous, and assume $u^{*,\epsilon}$ is continuous for all $\epsilon>0$.
    Let $\Kc\subset\real^{nN}$ be compact
    and $\delta>0$. Then, there exists 
    $\bar{\epsilon}_{\Kc,\delta}>0$ such that if $\epsilon<\bar{\epsilon}_{\Kc,\delta}$, then $\norm{u^{*,\epsilon}(\xi)-u^*(\xi)}\leq \delta$ for all $\xi\in\Kc$.
\end{lemma}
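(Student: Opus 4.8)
The plan is to prove this by a compactness-plus-contradiction argument, exploiting the fact that $u^{*,\epsilon}$ converges pointwise to $u^*$ as $\epsilon \to 0^+$ together with uniform boundedness over $\Kc$. First I would establish the pointwise limit: fix $\xi \in \real^{nN}$ and let $J(u,z) := \sum_i f_i(\xi_i,u_i)$ be the unregularized objective restricted to the feasible set of~\eqref{eq:cbf-distr-opt-pb-y}, and $J_\epsilon(u,z) := J(u,z) + \epsilon \sum_{k,j} (z_j^k)^2$. Since $(u^*(\xi), z^*(\xi))$ is feasible for~\eqref{eq:cbf-distr-opt-pb-y-epsilon} and $(u^{*,\epsilon}(\xi), z^{*,\epsilon}(\xi))$ minimizes $J_\epsilon$, a standard sandwiching gives $J(u^{*,\epsilon}(\xi)) \le J_\epsilon(u^{*,\epsilon}(\xi)) \le J_\epsilon(u^*(\xi), z^*(\xi)) = J(u^*(\xi)) + \epsilon \|z^*(\xi)\|^2$, while optimality of $u^*(\xi)$ for~\eqref{eq:cbf-distr-opt-pb-y} gives $J(u^*(\xi)) \le J(u^{*,\epsilon}(\xi))$ (here I use Proposition~\ref{prop:equivalence} to identify the optimizer of~\eqref{eq:cbf-distr-opt-pb-y} in the $u$-variable with that of~\eqref{eq:separable-distr-opt-pb}). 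Hence $J(u^{*,\epsilon}(\xi)) \to J(u^*(\xi))$ as $\epsilon \to 0$, and by strong convexity of $\sum_i f_i(\xi_i,\cdot)$ in $u$ (Assumption~\ref{as:regularity-convexity-optimization-pb}), closeness in objective value forces $\|u^{*,\epsilon}(\xi) - u^*(\xi)\| \to 0$: concretely, $\tfrac{\mu}{2}\|u^{*,\epsilon}(\xi) - u^*(\xi)\|^2 \le J(u^{*,\epsilon}(\xi)) - J(u^*(\xi)) - \nabla_u J(u^*(\xi))^\transpose(u^{*,\epsilon}(\xi) - u^*(\xi))$ and the first-order optimality of $u^*(\xi)$ over the convex feasible set makes the last term nonnegative when projected appropriately.

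Next I would upgrade pointwise convergence to uniform convergence on $\Kc$. Suppose not: then there exist $\delta_0 > 0$, a sequence $\epsilon_n \to 0^+$, and points $\xi_n \in \Kc$ with $\|u^{*,\epsilon_n}(\xi_n) - u^*(\xi_n)\| > \delta_0$. By compactness of $\Kc$, pass to a subsequence with $\xi_n \to \bar{\xi} \in \Kc$. The idea is to show $u^{*,\epsilon_n}(\xi_n) \to u^*(\bar{\xi})$, contradicting the separation. Continuity of $u^*$ (hypothesis) gives $u^*(\xi_n) \to u^*(\bar{\xi})$, so it suffices to show $u^{*,\epsilon_n}(\xi_n) \to u^*(\bar{\xi})$. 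For this, I would again use the sandwich: feasibility of $(u^*(\xi_n), z^*(\xi_n))$ for the $\epsilon_n$-regularized problem at $\xi_n$ gives $J^{\xi_n}(u^{*,\epsilon_n}(\xi_n)) \le J^{\xi_n}(u^*(\xi_n)) + \epsilon_n \|z^*(\xi_n)\|^2$, where $J^{\xi}$ denotes the objective at parameter $\xi$; combined with $J^{\xi_n}(u^*(\xi_n)) \le J^{\xi_n}(u^{*,\epsilon_n}(\xi_n))$ this yields $0 \le J^{\xi_n}(u^{*,\epsilon_n}(\xi_n)) - J^{\xi_n}(u^*(\xi_n)) \le \epsilon_n \|z^*(\xi_n)\|^2$. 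I need $\|z^*(\xi_n)\|$ bounded along the subsequence — this follows if $z^*$ is continuous (hence bounded on the compact $\Kc$), which the regularization discussion in the Appendix together with Proposition~\ref{prop:equivalence} can be used to justify, or alternatively one can pick for each $\xi_n$ a single feasible $(u,z)$ with $z$ bounded uniformly over $\Kc$ using Assumption~\ref{as:feasibility-no-constraint-mismatch-vars}. Then strong convexity (with modulus $\mu$ uniform in $\xi$ by assumption) gives $\|u^{*,\epsilon_n}(\xi_n) - u^*(\xi_n)\|^2 \le \tfrac{2}{\mu}\epsilon_n \|z^*(\xi_n)\|^2 \to 0$, so $u^{*,\epsilon_n}(\xi_n) \to u^*(\bar\xi)$, the desired contradiction.

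The main obstacle I anticipate is the handling of the $z$-component, i.e., obtaining a uniform-in-$\Kc$ bound on $\|z^*(\xi)\|$ (or on some feasible $z$ at parameter $\xi$), since the $z$-variables are not directly controlled by the objective of~\eqref{eq:cbf-distr-opt-pb-y} and the feasible set in $z$ is unbounded. The clean way around this is to invoke continuity of $z^*$ on the compact set $\Kc$; the subtlety is that $z^*$ (the minimizer in $z$ of the \emph{unregularized} problem~\eqref{eq:cbf-distr-opt-pb-y}) need not be unique, so one should instead argue directly with a measurable/continuous selection of feasible $z$, or bound $\epsilon_n\|z^*(\xi_n)\|^2$ by choosing, for each $n$, the regularized optimizer $z^{*,\epsilon_n}(\xi_n)$ itself and using the first-order optimality in $z$ (which reads $2\epsilon_n z_j^k = \sum_{i} (\text{multiplier terms})$, bounding $\|z^{*,\epsilon_n}(\xi_n)\|$ in terms of the Lagrange multipliers). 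A secondary technical point is confirming that the first-order optimality of $u^*(\xi)$ over the feasible set makes the cross term in the strong-convexity inequality have the right sign; this is routine since the feasible set is convex and $u^*(\xi)$ is its minimizer of $J^\xi$, so $\nabla_u J^\xi(u^*(\xi))^\transpose(v - u^*(\xi)) \ge 0$ for all feasible $v$, and $u^{*,\epsilon}(\xi)$ is feasible. Assembling these pieces yields the existence of $\bar\epsilon_{\Kc,\delta}$.
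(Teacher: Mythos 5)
Your proposal is essentially correct in outline but takes a genuinely different route from the paper. The paper's proof is short: it invokes the pointwise sensitivity result \cite[Lemma 4.2]{PM-JC:23-cdc} to get $\norm{u^{*,\epsilon}(\xi)-u^*(\xi)}\leq\delta/2$ for each fixed $\xi$, then uses the \emph{assumed} continuity of $u^*$ and $u^{*,\epsilon}$ to extend the bound to a neighborhood $\Nc_\xi$, and finishes with a finite subcover of $\Kc$. You instead re-derive the pointwise result from scratch via the standard Tikhonov sandwich $J(u^{*,\epsilon})\le J(u^*)+\epsilon\norm{z^*}^2$ together with strong convexity and the variational inequality at $u^*$, obtaining the quantitative bound $\norm{u^{*,\epsilon}(\xi)-u^*(\xi)}^2\le\tfrac{2\epsilon}{\mu}\norm{z^*(\xi)}^2$, and then get uniformity over $\Kc$ from compactness. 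What your route buys: an explicit $O(\sqrt{\epsilon})$ rate, no reliance on the continuity of $u^{*,\epsilon}$ (the paper's covering argument implicitly needs the neighborhoods $\Nc_\xi$ to work for all $\epsilon<\bar\epsilon_{\xi,\delta}$ simultaneously, i.e., a form of equicontinuity of the family $\{u^{*,\epsilon}\}_\epsilon$, which your quantitative bound sidesteps), and no dependence on the external lemma. What it costs is exactly the obstacle you identify: a bound on $\norm{z^*(\xi)}$ uniform over $\Kc$, where $z^*(\xi)$ is \emph{some} feasible mismatch vector accompanying $u^*(\xi)$ in~\eqref{eq:cbf-distr-opt-pb-y}.

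That gap is real but closable, and you should close it concretely rather than listing alternatives. Since the sandwich only needs \emph{one} feasible $(u^*(\xi),z)$ per $\xi$, take the minimal-norm $z$ solving the balance equations $\sum_{j\in\Nc_i\cap V(\Gc_k)}(z_i^k-z_j^k)=-g_i^k(\xi_{\Nc_i},u_i^*(\xi))+\tfrac{1}{|V(\Gc_k)|}\sum_{l\in V(\Gc_k)}g_l^k(\xi_{\Nc_l},u_l^*(\xi))$; this is a Laplacian system on the connected subgraph $\Gc_k$ whose right-hand side lies in the range of the Laplacian, so its minimal-norm solution depends continuously (indeed linearly, via the pseudoinverse) on the right-hand side, which is itself continuous in $\xi$ by continuity of $g_i^k$ and $u^*$. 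Hence this selection is bounded on the compact $\Kc$, which is all you need. Your appeal to ``continuity of $z^*$'' as stated would not work, precisely because the $z$-minimizer of the unregularized problem is non-unique. One further small point: Assumption~\ref{as:regularity-convexity-optimization-pb} gives strong convexity of $f_i(\xi_i,\cdot)$ for each $\xi_i$ but does not assert a modulus uniform in $\xi$; you should either restrict to the compact $\Kc$ and argue a uniform positive modulus there (which holds for the quadratic objective of~\eqref{eq:diff-drive-pb} with $\Gamma_i$ continuous and invertible), or state it as an added hypothesis. With those two points pinned down, your argument is a valid and more self-contained alternative to the paper's proof.
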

\begin{proof}
    By~\cite[Lemma 4.2]{PM-JC:23-cdc}, for each $\xi\in\Kc$, there exists $\bar{\epsilon}_{\xi,\delta}>0$ such that if $\epsilon < \bar{\epsilon}_{\xi,\delta}$, then $\norm{u^{*,\epsilon}(\xi)-u^*(\xi)}\leq \frac{\delta}{2}$. Now, since $u^{*,\epsilon}$ and $u^*$ are continuous, there exists a neighborhood $\Nc_{\xi}$ of $\xi$ such that $\norm{u^{*,\epsilon}(\hat{\xi})-u^*(\hat{\xi})}\leq \delta$ for all $\hat{\xi} \in \Nc_{\xi}$.
    Since $\cup_{\xi\in \Kc} \Nc_{\xi}$ is an open covering of the compact set $\Kc$, there exists a finite subcover, i.e., there exists $N_{\Kc}\in\mathbb{Z}_{>0}$ and $\{ \xi_i \}_{i=1}^{N_{\Kc}}$ such that $\Kc \subset \cup_{i=1}^{N_{\Kc}} \Nc_{\xi_i}$. The result follows by letting $\bar{\epsilon}_{\Kc,\delta}:=\min\limits_{i\in[N_{\Kc}]} \{ \bar{\epsilon}_{\xi_i,\delta} \}$.
\end{proof}


\end{document}